  \providecommand\BibTeX{{%
    \normalfont B\kern-0.5em{\scshape i\kern-0.25em b}\kern-0.8em\TeX}}}
\renewcommand\footnotetextcopyrightpermission[1]{} %
\newtheorem{theorem}{Theorem}
\newtheorem{lemma}[theorem]{Lemma}
\renewcommand\subsubsection{\@secnumfont}{\bfseries}%
\renewcommand\subsubsection{\@startsection{subsubsection}{3}
  \z@{.5\linespacing\@plus.7\linespacing}{-.5em}%
  {\normalfont\bfseries}}
\begin{document}

\title{Celeritas: Fast Optimizer for Large Dataflow Graphs}

\author{Hengwei Xu}
\email{xuhw@mail.ustc.edu.cn}
\affiliation{University of Science and Technology of China\city{Hefei}
\country{China}}
\author{Yong Liao}
\email{yliao@ustc.edu.cn}
\affiliation{University of Science and Technology of China\city{Hefei}
\country{China}}
\author{Haiyong Xie}
\email{haiyong.xie@ieee.org}
\affiliation{Adv. Innovation Center for Human Brain Protection, Capital Medical University\city{Beijing}
\country{China}}
\author{Pengyuan Zhou}\authornote{*Corresponding authors}
\email{pzhou@ustc.edu.cn}
\affiliation{University of Science and Technology of China\city{Hefei}
\country{China}}

\begin{abstract}

The rapidly enlarging neural network models are becoming increasingly challenging to run on a single device. Hence model parallelism over multiple devices is critical to guarantee the efficiency of training large models. Recent proposals fall short either in long processing time or poor performance. 
Therefore, we propose Celeritas, a fast framework for optimizing device placement for large models. Celeritas employs a simple but efficient model parallelization strategy in the Standard Evaluation, and generates placement policies through a series of scheduling algorithms. 
We conduct experiments to deploy and evaluate Celeritas on numerous large models. The results show that Celeritas not only reduces the placement policy generation time by 26.4\% but also improves the model running time by 34.2\% compared to most advanced methods.

\end{abstract}

\maketitle
\pagestyle{plain}

\section{Introduction}

In the past few years, deep learning has dominated many fields such as computer vision and natural language processing, and continues to expand in other areas. As deep learning models are becoming more and more complex, there is a fast growing demand for efficiently training large models.
On one hand, large models not only can improve the performance on individual testing datasets, but also are able to absorb knowledge from more fields to deal with different tasks, thanks to their large numbers of parameters. Hence large models greatly improve the generalizability of the neural network models. 
On the other hand, large models avoid the requirement for different, task specific models, thus can reduce the cost significantly to a certain extent.

However, as the models and datasets are getting increasingly complex, the  development of model training hardware (e.g., GPUs) is getting more and more difficult to keep up with the pace. Consequently, computing power has become a major bottleneck for further development of large deep learning models. 
Taking computer vision tasks as an example, the number of model parameters of GoogLeNet~\cite{szegedy2015going} in 2014 was 4 million, while the SOTA Swin Transformer~\cite{liu2021swin} in 2021 has 3 billion parameters, 750 times of GoogleNet's parameters. 
NLP models can be even larger, e.g., GPT-3 has 175 billion parameters. 
As the number of model parameters is increasing at a much faster pace than the growth rate of GPU memory (e.g., from P4's 8 GB to A100's 40 GB), deep learning models have to employ smaller batch sizes in training, resulting in significantly higher communication costs and performance deterioration.

Researchers have proposed many data parallelism methods to accelerate model training by splitting datasets across multiple workers, with each worker running a complete model, which poses challenges on the workers' capacity. 
For example, BERT~\cite{devlin2018bert} requires at least more than 16~GB memory, only a few quite expensive GPUs can meet this demand at the time when the model was proposed.
However, a more preferred solution is to leverage the model parallelism, namely, split a large model into multiple smaller ones and assign them to computing devices, such that large models can still be trained even using GPUs with limited memory and/or computational resources. 
For example, an A100 GPU has 40~GB memory and 19.5~TFLOPs of computing power, while five 2080Ti GPUs have the same amount of memory and three times of computing power at only a quarter of the price. Therefore, with model parallelism, much larger models can be possibly trained on 2080Ti GPUs at much lower costs.

A key challenge in the model parallelism is how to split a large model into smaller ones and assign them to computing devices to achieve higher computational efficiency. 
Note that a neural network model can be visualized as a computational graph with nodes representing computing operations and edges representing tensor data transport. With the notion of computational graphs, the model parallelism can be translated into the problem of determining appropriate placement of computing operations onto devices, which becomes critical for improving the efficiency of training large models. Note that this problem has a huge solution space and is NP-hard~\cite{hoogeveen1994three}. For example, there are $4^{5000}$ potential allocations for a model with 5000 nodes assigned to 4 devices. 

To address this challenge, recent studies use reinforcement learning (RL) to assign operations to multiple devices to minimize the running time. 
However, due to the huge solution space, RL-based methods often take several hours to find a placement, which greatly impacts the usability. 
More recently, some of the latest works use heuristic-based algorithms to quickly find placements, which either produce poor placements or require long running time. Furthermore, the Standard Evaluation, a fundamental problem of heuristic algorithm, would behave differently when applying model parallelism for large models. However, this issue has been overlooked in recent heuristic works.

In this paper, we propose Celeritas, a fast and efficient framework, to leverage the model parallelism for large models in order to achieve high computational efficiency. 
Celeritas can efficiently divide the computational graph of the deep learning framework and quickly find an optimal device placement with significantly less resource consumption than SOTA. 
Compared with existing works that need hours of training time, Celeritas can find the optimal device placement within tens of seconds. 
Specifically, our contributions are summarized as follows.
Firstly, Celeritas employs a fast and efficient Standard Evaluation, including an innovative ordering method \emph{CPD-TOPO}, to generate model information for large models. Notably the Standard Evaluation plays an important role in perfecting the model parallelism performance yet has been overlooked by previous approaches.
Secondly, we explore the differences between model parallelism and traditional scheduling tasks~(Section~\ref{subsec:merge}), and design a proper operation fusion method for Celeritas, which significantly reduces the communication-to-computing ratio (CCR) up to 10 times~(\autoref{tab:ccr}).   
Last but not least, we deployed Celeritas to optimize multiple large models. Empirical evaluation results show that Celeritas generates parallel policies 26.4\% faster than the baselines while successfully reducing the model running time by 34.2\%. Celeritas is also effective in avoiding out of memory to generate stable results.


\vspace{-0.2in}
\section{RELATED WORK} 

\textbf{Model Parallelism} assigns parts of a model to multiple devices. 
RL-based approaches have been widely explored, which mostly take hours to generate placement strategies (such strategies may well be unstable). 
Mirhoseini et al.~\cite{mirhoseini2017device} trained an RNN-based sequence-to-sequence RL model,  taking each step's running time as a reward to compute the optimal placement, which required 17-27 hours with 80 to 160 4-GPU machines. They also proposed a hierarchical model~\cite{mirhoseini2018hierarchical} for better generalization, which yet still needed 12.5 GPU-hours to obtain the strategy. 
Spotlight~\cite{gao2018spotlight} used proximal policy optimization (PPO) but still needed 9 hours to find placements. 
Post~\cite{gao2018post} combined PPO and cross-entropy reduction to identify better placement, which still took 13.5 hours to train.

An apparent disadvantage of these methods is that they require retraining for every new model and thus result in a massive amount of computation. 
Numerous methods are proposed to address this problem.
For instance, Placeto~\cite{addanki2019placeto} applied a graph embedding network to capture potential information of the computational graph to generalize to unseen models. Nevertheless, Placeto still needed hours to seek the optimal placement even on customized simulation platform due to its MDP logic. 
SGDP~\cite{zhou2020single} proposed a more sophisticated structure by using GraphSage to learn a embedding for each operation and placing operations through a neural network based on Transformer-XL. The authors later extended SGDP to solve multiple tasks without too many additional parameters~\cite{zhou2020transferable}. They proposed a recurrent attention strategy to share model parameters in operation fusion, device placement and operation scheduling. Such a complex neural network also took up to 4 hours to achieve the best performance. 
Inspired by SGDP, Mars~\cite{lan2021accelerated} proposed a simplified segment-level sequence-to-sequence model, which needed 5 hours to find the optimal placement.

Furthermore, Baechi~\cite{jeon2020baechi} proposed a method to find a placement strategy in less than three minutes using a heuristic algorithm. However, Baechi's results are mostly equal to or worse than the strategies specified by human experts. 
In order to get better results, Pesto~\cite{hafeez2021towards} formulated the device placement problem as an integer linear program. While getting a finer placement strategy, Pesto also came at a cost of up to 50 minutes of placement time.

\textbf{Task scheduling.} Traditional task scheduling domains treat tasks as DAGs that are assigned to different processors. Yet they generally assume congestion-free communication and do not impose limits on processor capacity. Even so, the DAG scheduling problem is NP-hard. 
The mainstream approaches include list scheduling and cluster scheduling.
List scheduling is a two-step process which first determines the priority of tasks and then assigns tasks to each device in order. 
ETF (Earliest Time First)~\cite{hwang1989scheduling} selects the nodes with the earliest start time (EST) one at a time and then sequentially assigns tasks to devices with EST. 
HEFT(Heterogeneous Earliest-Finish-Time)~\cite{topcuoglu2002performance}, instead, takes $blevel$ (see Section~\ref{subsec:merge}) as the priority and employs an insertion-based strategy to seek the earliest start time of a node. 
Cluster scheduling is implemented to allocate tasks to an unlimited number of processors and includes steps of merging task nodes into clusters, assigning clusters to devices. 
LC (Linear Clustering)~\cite{kim1988general} merges all nodes on the critical path into one cluster each time, deletes all edges of the path, and repeats this step to accomplish the merging. 
DSC (Dominant Sequence Clustering)~\cite{yang1994dsc} uses a more complex strategy. For each node, it is determined whether merging with its parent node will reduce the EST of that node to produce a cluster. 
In the allocation phase, GLB (Guided Load balancing)~\cite{radulescu1998glb} allocates the cluster to the devices with the lowest current load in the order of EST. It is a good attempt to optimize the computational graph with these traditional task scheduling method. Nevertheless, the special characteristics of computational graphs, such as high CCR and huge number of nodes, memory limitations on devices, etc., pose significant challenges for these approaches.

\section{Motivation}

Existing studies have been facing with numerous challenges, which motivate us to design a fast framework for large models by leveraging the model parallelism.
Table~\ref{tab:limitation} summarizes the limitations of the existing methods.

\begin{table}[!t]
\vspace{-0.15in}
\caption{Method comparison}
\vspace{-0.1in}
\label{tab:limitation}
\small{
\begin{tabular}{|c|c|c|c|c|l}
\cline{1-5}
Method         & Baseline & Placement     & Handle  & Stability of  &  \\ 
    & Measurement &  time    &  OOM &   results & \\
\cline{1-5}
HRL & Yes                 & hours             & No         & No                   &  \\ \cline{1-5}
Baechi         & No                  & seconds           & No         & Yes                  &  \\ \cline{1-5}
Pesto          & No                  & dozens of minutes & Yes        & Yes                  &  \\ \cline{1-5}
\textbf{Celeritas}        & \textbf{Yes}                 & \textbf{seconds}           & \textbf{Yes}        & \textbf{Yes}                  &  \\ \cline{1-5}
\end{tabular}
} 
\vspace{-0.1in}
\end{table}

\noindent\textbf{Out Of Memory~(OOM)} is a common annoyance in learning-based systems. For example, we optimized a model with more than one GPU capability using the learning-based technique HRL~\cite{mirhoseini2018hierarchical}. The results illustrated in Figure~\ref{fig:grapper} show that HRL suffers from OOM. The reason is that it applies an initial strategy assuming all nodes assigned to the same device. Though HRL kept trying to avoid OOM, it only succeeded a few times. The reason is that HRL albeit sets a large penalty for OOM occurrences, it provides no solution.  To the best of our knowledge, this drawback exists in other learning-based approaches as well.

\begin{figure}[ht]
    \vspace{-0.1in}
    \centering
    \includegraphics[width=.95\linewidth]{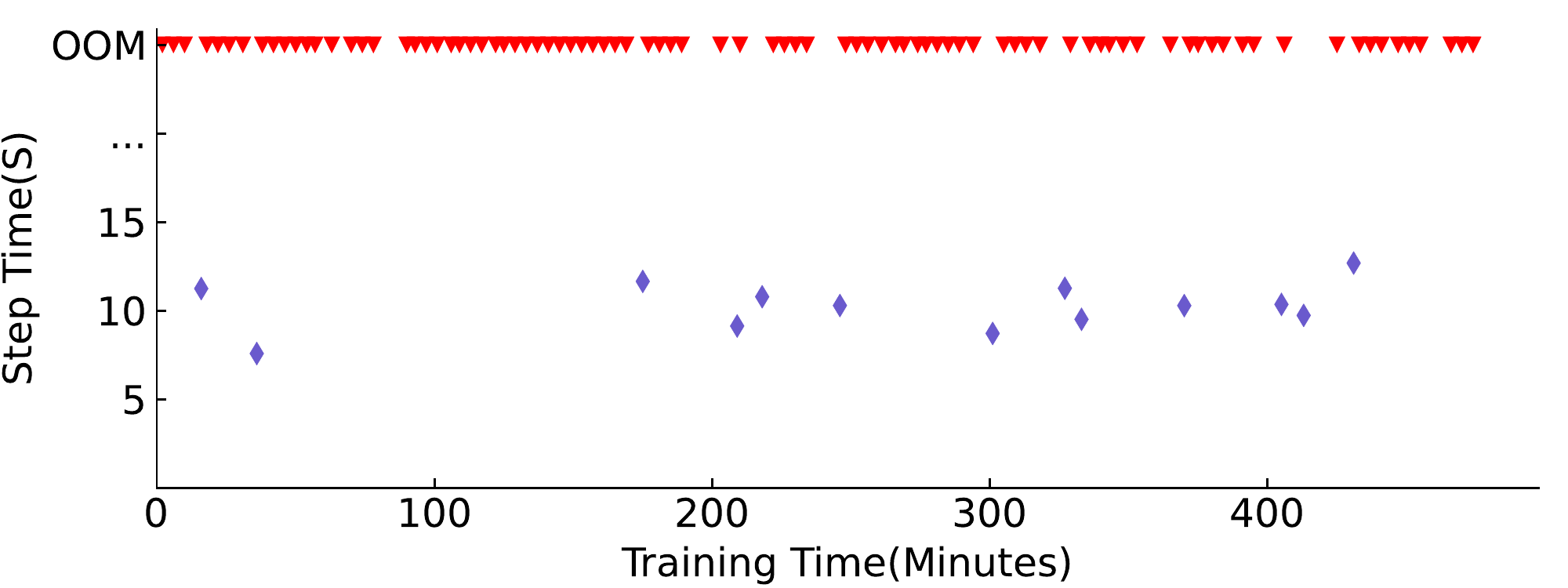}
    \vspace{-0.1in}
    \caption{The results of HRL running on the Inception\_V3 model for 8 hours with the batch size of 512. 
    } 
    \label{fig:grapper}
    \vspace{-0.15in}
\end{figure}

\begin{figure*}[!t]
\vspace{-0.15in}
    \centering
    \includegraphics[width=.75\linewidth]{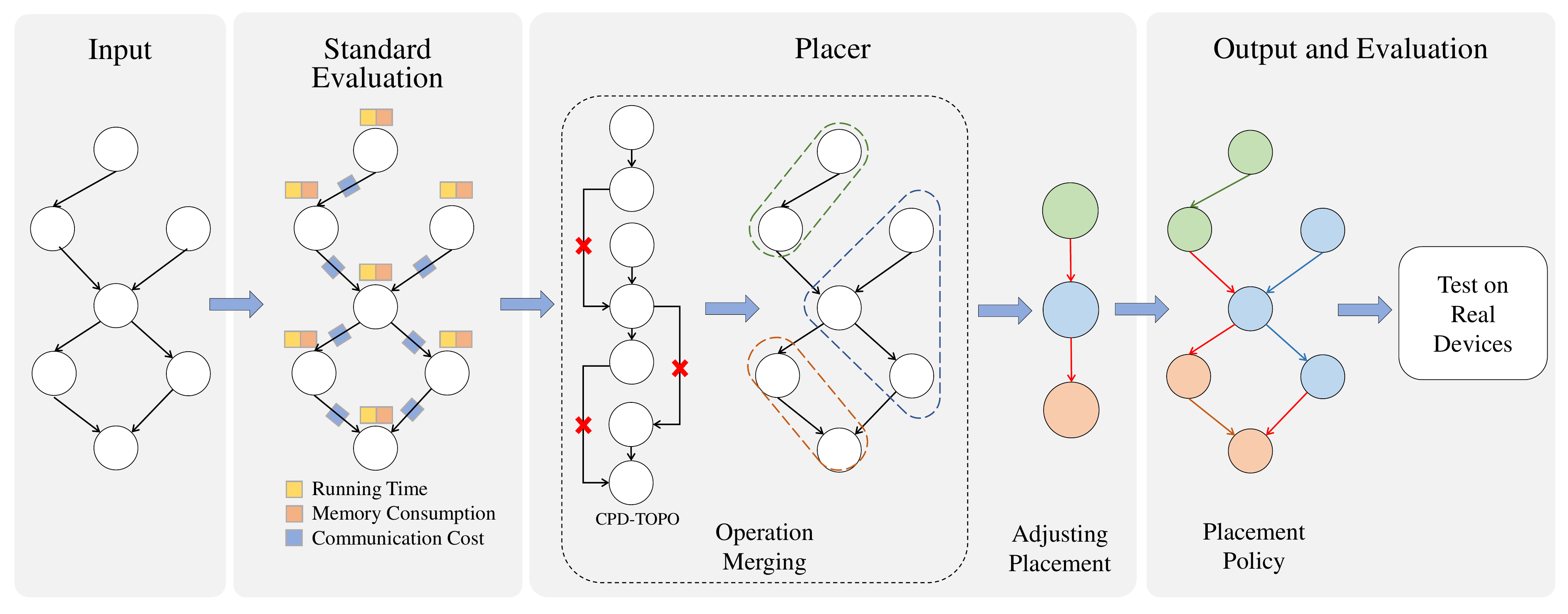}
    \vspace{-0.1in}
    \caption{Overview of Celeritas. The Standard Evaluation acquires the edge and node information of the computational graph. Celeritas then coarsens the graph by merging the CPD-TOPO ordered nodes. Afterwards, Adjusting Placement allocates the nodes in the coarse graph. The final placement strategy is generated by mapping the coarse graph to the original one. The different colors of the nodes indicate the placement on different GPUs.} 
    \label{fig:frame}
    \vspace{-0.1in}
\end{figure*}

\vskip 0.08in\noindent\textbf{Standard Evaluation.} Unlike RL-based methods, heuristic methods cannot constantly try different placement strategies on real devices. Instead, heuristic methods run the model for a handful of iterations at the beginning in order to estimate the computation time and memory consumption of the model, which is most likely to change as the batch size of the model grows. 
However, current heuristic methods do not utilize a model parallelism strategy in the Standard Evaluation. Hence their operation information is based on training on a single GPU at the beginning. Consequently, when the batch size is larger than a single GPU's memory, these heuristic methods are no longer applicable. This limitation contradicts the goal of model parallelism, training a model too large for a single GPU, thus severely impacting the practicability and flexibility.

\vskip 0.08in \noindent \textbf{Trade-off.} OOM happens occasionally with heuristics due to too large models or inaccurately estimated memory consumption (see Section~\ref{sec:experiments}). A more serious drawback of heuristic methods is the low efficiency. For instance, Baechi~\cite{jeon2020baechi} reports that their results are mostly identical to or worse than the strategies recommended by human experts. Pesto~\cite{hafeez2021towards} presents 10-30\% performance improvement with a cost of up to 50 minutes of placement time.

\section{Placement Preparation}\label{sec:prepare}

We now present Celeritas, an efficient and lightweight framework, to address the aforementioned challenges. The overall architecture of Celeritas is illustrated in Figure~\ref{fig:frame}. We break down the description into \emph{placement preparation} (Section~\ref{sec:prepare}) and \emph{placer} (Section~\ref{sec:placer}).

\subsection{Problem Formulation}

A deep learning model can be represented as a directed acyclic graph~(DAG), $G(V, E)$, where a node $v \in V$ refers to a computation operation and a directed edge $e \in E$ represents a data transmission between two nodes. 
Given a set of devices (GPUs) $D=\{d_1,d_2,...,d_m\}$, our goal is to assign a total of $n$ nodes (computational operations) in graph $G$ to the devices, in order to fulfill the training demand with minimal running time. 
This problem is NP-hard~\cite{hoogeveen1994three}. Note that the operations allocated for each device are constrained by the GPU memory, resulting in greater challenges.

\subsection{Standard Evaluation}
\label{subsec:initial}

\subsubsection{Rough Estimation.} 

Construction of the computational graph requires estimated computation time and memory consumption of the nodes, as well as the communication costs among nodes. 
As mentioned in~\cite{hafeez2021towards}, the communication costs among nodes can be determined by a linear fit: $t = kd+b$, where $d$ is the transmitted data size, and $k$ and $b$ are constants calculated in advance using off-line analysis. So we just need to focus on the node information.
This is straightforward when the batch size is small enough for the entire model to be placed on a single GPU. In such a case, we only need to run the model on a single GPU for a few iterations to collect the graphical information. 
However, most model parallelism jobs demand more memory than a single GPU supports, which has been overlooked by previous works.

Celeritas employs a new Standard Evaluation method, which consists of two steps. In the first step, we use multiple small-batch models to roughly estimate the large-batch of computational graphs; while in the second step, we acquire accurate operation information by running the large-batch model for a few iterations using a sequential placement strategy based on memory constraints. 

More specifically, in the first step, we adjust the batch size to ensure the model complies with the memory limitation of a single GPU; we then run the model for a few iterations for data collection. 
The rationale of this step comes from two key observations:
on one hand, the computation time of the model barely changes in different iterations, as observed in~\cite{mirhoseini2017device,mirhoseini2018hierarchical}; 
on the other hand, changing the batch size resulted in varying memory consumption and computation time, as shown by~\cite{gao2020estimating}, where the memory consumption is directly linked to the batch size.
To predict the nodes' memory consumption with large batch sizes, we can establish a linear regression model for each node by measuring the nodes' memory consumption with small batch sizes. Unfortunately, this method does not apply well to the computation time prediction but can only provide a rough estimation.

In the second step, we use the roughly estimated computational graph via the linear regression model (which is obtained in the first step) to seek a fast and effective method to realize model parallelism. 
The main goal is to place the model using a large batch size on as few GPUs as possible. In other words, we aim to improve the utilization of each GPU as much as possible.
We can leverage a straightforward approach, in which we first place nodes on one device until reaching its memory limit, then move on to the next device. However, this straightforward approach has many drawbacks. Instead, we adopt the topological ordering to overcome the drawbacks.

\subsubsection{DFS topological ordering.} 

Note that assigning operation nodes to devices in a random order can cause huge communication costs and straggle the model's running duration. Celeritas assigns the nodes with topological ordering prior to allocation. 

Topological ordering refers to that for the edge $(u, v)$ in graph $G$, $u$ is always in front of $v$. A common method is to repeatedly acquire the node with an indegree of 0 and then remove it along with all connected edges. 
By repeating this process, a topological order of the computational graph can be generated, which, however, is not unique. 
In addition, we find that different topological orders have great influences on the placement results. 

In a manner similar to Breadth-First-Search (BFS), M-TOPO~\cite{jeon2020baechi} maintains a queue of 0-indegree nodes and then iteratively removes the node at the head as well as all its associated edges, finally appending a new 0-indegree node to the end of the queue.

\begin{figure}[!t]
\vspace{-0.15in}
    \centering
    \includegraphics[width=.65\linewidth]{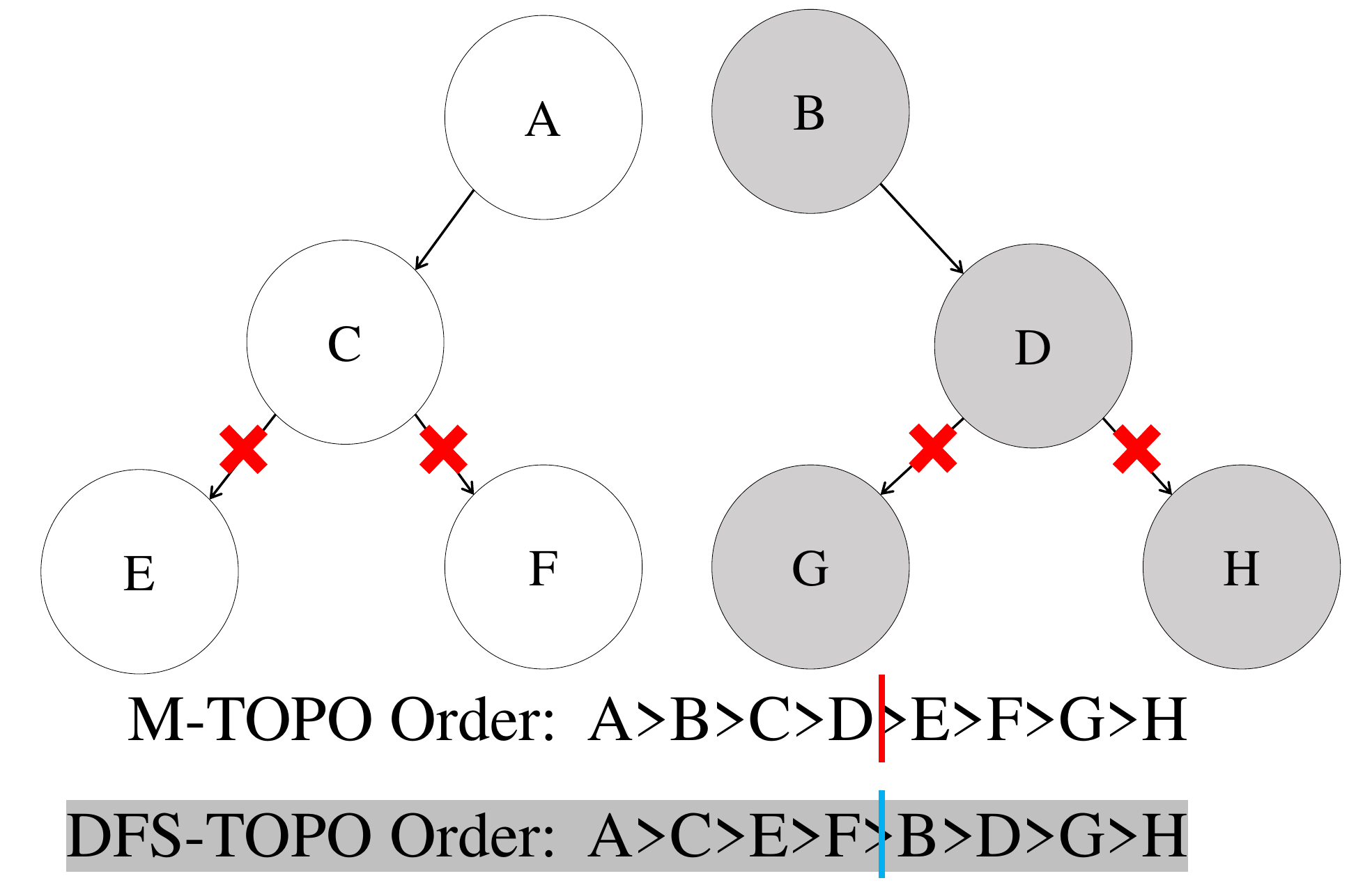}
    \vspace{-0.1in}
    \caption{M-TOPO may cause massive communication costs, while \emph{DFS-TOPO} does not.} 
    \label{fig:order}
    \vspace{-0.3in}
\end{figure}

However, a notable drawback of M-TOPO is that it ignores the connections of nodes during the sequence generation. Consequently, nearby nodes may be allocated to different devices (leading to high communication costs).
To address the above problems, we design a new approach, \emph{DFS-TOPO}, to generate topological order imitating the logic of Depth-First-Search~(DFS). 

More specifically, \emph{DFS-TOPO} maintains a queue $Q$ that contains all 0-indegree nodes. Each time, \emph{DFS-TOPO} removes $v_0$ from the head of the queue as the output of the topological sequence, and removes all the associated edges of $v_0$. 
Then the 0-indegree node among the children nodes of $v_0$ is added to the head of the queue  (see Section~\ref{subsec:merge} for multiple nodes with the same 0-indegree). As such, it provides better ordering than M-TOPO. 
For instance, consider the example of Figure~\ref{fig:order} where we would like to assign the nodes to two devices. 
Placing them in the order of M-TOPO results in cutting 4 edges, while \emph{DFS-TOPO} requires none, which suggests that M-TOPO causes higher communication costs. 
Such a case is quite common in large models. For example, around 1300 out of the 3810 nodes in the Tensor Holography model have an initial indegree of 0, which would cause enormous communication costs if using M-TOPO due to its ignoring of the connecting edges. 
\emph{DFS-TOPO} solves this problem by giving higher priority to the children nodes, resulting in much less communication costs.

%
%

\section{Placer}
\label{sec:placer}

Celeritas takes two steps to generate placement strategies: it first employs the operation fusion via CPD-TOPO to construct a coarse-grain graph, then leverages the adaptive placement via Adjusting Placement to allocate the nodes of the coarse-grain graph. 

\subsection{Operation Fusion}
\label{subsec:merge}

Traditional task scheduling methods cannot cope well with the computational graphs of large models, due to their computational graphs' large number of nodes and high communication-to-computing ratio (CCR). 
To address this challenge, we design a method of operation fusion to merge operation nodes and thus reduce the graph size. By doing so, we can effectively reduce the number of nodes and lower the CCR, thus improving the effectiveness of the placement strategy and the running time of the placement algorithm. 
More specifically, operation fusion consists of three components: CCR reduction, loop avoidance via critical path DFS-TOPO based sorting, and merging via the CPD-TOPO algorithm. Note that to avoid merge-induced loops, we sort the nodes in critical path DFS-TOPO order (a modified version of DFS-TOPO). Additionally, we use Kernighan's Algorithm to find the optimal breakpoint and merge all nodes between two adjacent breakpoints.

\subsubsection{CCR Reduction.}
\label{subsubsec:ccr}

According to ~\cite{hafeez2021towards}, most modern deep neural network models consist of thousands of nodes and the processing time of the nodes is negligible in comparison to the data transfer time between the nodes. To describe this in more detail, we leverage the communication-to-computing ratio (CCR)~\cite{wang2018list}, a notion from the task scheduling field. For the computational graph $G(V, E)$, its CCR is defined as
\begin{equation}\label{eq:CCR}
   {CCR = \frac{\sum_{(v_i,v_j)\in E}c_{i,j}}{\sum_{v_i\in V}w_i}}
\end{equation}
Where $c_{i,j}$ refers to the data transmission time from node i to node j, $w_i$ is the computation time of the computing node $i$. Table \ref{tab:ccr} shows the number of nodes and CCR of some computational graphs. Even for the simpler models, there are several thousand nodes, and the complex ones such as Transformer have up to 36352 nodes. For CCR, the lowest is 18.63, and Transformer is as high as 111.957. In traditional task scheduling, the maximum number of nodes is usually only a few hundred with a maximum CCR around 10~\cite{wang2018list}. Thus conventional task scheduling strategies lack the capability to handle large-scale computational graphs. For example, some studies have discovered that the ETF scheduling algorithm employed in Baechi performs poorly when the CCR is high~\cite{wang2018list}~\cite{djigal2020ippts}~\cite{huang2016task}.

Therefore, we try to reduce CCR by operation fusion to improve the applicability of existing scheduling algorithms. We choose to merge nodes into clusters and consider all the nodes within a cluster as an integrated new node. Naturally, the nodes merged into the same cluster will be placed on the same device and thus the data transfer time can be ignored compared to the data transfer between devices. The computation time of the new node equals the total computation time of all member nodes in the cluster. As a result, the denominator of Equation~\ref{eq:CCR} is reduced and the numerator is increased, which effectively reduces the CCR.
Another benefit of this method is that it decreases the graph size which can significantly reduce the running time. Some large computational graphs require so long time for optimization that is intolerable in practice. For instance, the Transformer variant we used, which contains 36,352 nodes, requires more than ten minutes to complete the optimization even using the simple ETF scheduling method. Pesto~\cite{hafeez2021towards} reports a running time of over a week on a computational graph with 4000 nodes. Our merging method can greatly accelerate the pace.

\subsubsection{Loop Avoidance.}
\label{subsubsec:avoid}

Note that the merging operation is not straightforward and can cause serious consequences with the wrong configuration. Because erroneous merging can create a loop in the generated cluster. In the example of Figure~\ref{fig:merging}, merging nodes $M$ and $N$ to generate a new node $X$ generates a ring $X \rightarrow Y \rightarrow Z\rightarrow X $. 
Loops can cause difficulties in both the optimization and execution of the computational graph. This is due to the fact that the nodes in the computational graph represent computational tasks each of which must wait for the predecessor node to complete computation and send the data. The nodes in the loop would be trapped in a state of waiting for each other, which is similar to the deadlock state in multithreading. In a nutshell, we must avoid loops. 
%
The following theorem shows the existence of a sufficient condition for merging nodes in computational graphs.

\begin{figure}[!t]
\vspace{-0.1in}
    \centering
    \includegraphics[width=.65\linewidth]{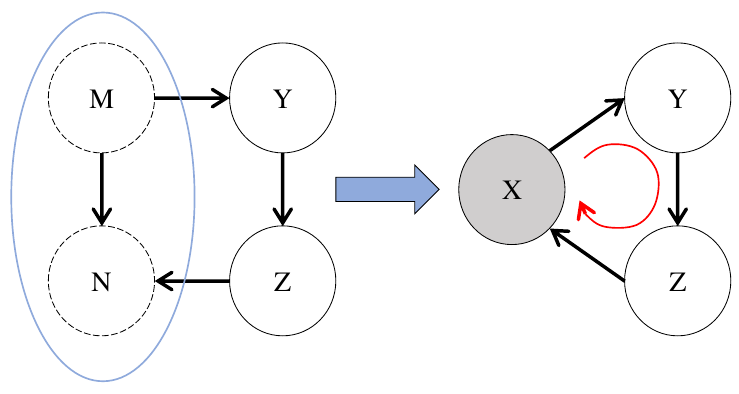}
    \vspace{-0.1in}
    \caption{Improper merging may cause loops.} 
    \label{fig:merging}
    \vspace{-0.2in}
\end{figure}

\begin{theorem}
\label{theorem:merging}
For $e=(u, v)\in E$ in a DAG, merging $u, v$ does not introduce a loop if and only if there is no other path from $u$ to $v$.
\vspace{-0.1in}
\end{theorem}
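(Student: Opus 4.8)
The plan is to make the contraction precise and then prove the two implications separately. Formally, merging the endpoints of $e=(u,v)$ produces a new graph $G'$ in which $u$ and $v$ are replaced by a single node $X$: every edge $(w,u)$ or $(w,v)$ of $G$ becomes an edge $(w,X)$, every edge $(u,w)$ or $(v,w)$ becomes $(X,w)$, and the contracted edge $e$ itself disappears (it is absorbed into $X$ and does not survive as a self-loop). The statement to prove is then that $G'$ is acyclic if and only if $e$ is the unique path from $u$ to $v$ in $G$.

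For the ``only if'' direction I would argue by contraposition. Assume there is a second path $P\colon u \to w_1 \to \dots \to w_k \to v$ distinct from $e$. Since $G$ is a DAG every path is simple, so the intermediate vertices $w_1,\dots,w_k$ all differ from $u$ and from $v$; moreover $P\neq e$ forces $k\ge 1$. Under the contraction both endpoints of $P$ collapse to $X$ while the $w_i$ are untouched, so $P$ becomes the closed walk $X \to w_1 \to \dots \to w_k \to X$ in $G'$, a genuine cycle. Hence merging introduces a loop, which is exactly what this direction needs.

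For the ``if'' direction I would assume $G'$ contains a cycle $C$ and derive a second $u$--$v$ path in $G$. The first observation is that $C$ must pass through $X$: every other vertex and every other edge of $G'$ already lives in $G$, so a cycle avoiding $X$ would be a cycle of the acyclic graph $G$, a contradiction. Writing $C$ as $X \to y_1 \to \dots \to y_m \to X$, I would lift its first and last edges back to $G$: the edge leaving $X$ originates from some $a\in\{u,v\}$ and the edge entering $X$ terminates at some $b\in\{u,v\}$, giving a path $a \to y_1 \to \dots \to y_m \to b$ in $G$. A short case analysis on $(a,b)$ finishes the argument: the choices $(u,u)$ and $(v,v)$ yield a cycle in $G$, and $(v,u)$ yields a cycle once the edge $e$ is prepended, all contradicting acyclicity of $G$; the remaining choice $(u,v)$ yields a path $u \to y_1 \to \dots \to y_m \to v$.

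The main obstacle, and the point I would be most careful about, is ruling out the degenerate possibility that this lifted $(u,v)$ path is merely the edge $e$ in disguise (which would give $m=0$ and no genuine second path). This is where the convention that $e$ is absorbed rather than kept as a self-loop matters: a length-one cycle $X\to X$ in $G'$ could only come from an edge between $u$ and $v$, and the sole such edge is $e$, which no longer exists in $G'$. Hence every cycle $C$ has $m\ge 1$, so the $(u,v)$ case delivers a path through at least one intermediate vertex, i.e.\ a second $u$--$v$ path distinct from $e$, contradicting the hypothesis and completing the proof.
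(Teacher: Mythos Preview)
Your proof is correct and follows essentially the same approach as the paper's: both directions hinge on the correspondence between cycles through the merged vertex $X$ in $G'$ and $u$--$v$ paths in $G$. Your version is considerably more careful---the paper's proof omits the case analysis on $(a,b)$ and the $m\ge 1$ check that you rightly flag as the delicate point, simply asserting that a cycle through $X$ lifts to a $u\to v$ path.
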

\begin{proof}
Sufficiency: assume that there exists no other path from $u$ to $v$ except the directed edge $(u, v)$. 
Suppose that merging nodes $u$, $v$ produces node $x$ and introduces a cycle $L (x \rightarrow v_1 \rightarrow v_2 \rightarrow \cdots \rightarrow v_n \rightarrow x)$. 
Since the graph before merging is acyclic, there must be at least one path $(u \rightarrow  v_1 \rightarrow v_2 \rightarrow \cdots \rightarrow v_n \rightarrow v)$, which contradicts the previous assumption. 

Necessity: when merging $u,v$ into a new node $x$ does not introduce loops, namely, there is no path $(x \rightarrow v_1 \rightarrow v_2 \rightarrow  \cdots \rightarrow v_n \rightarrow x)$; in other words, there is no other path from $u$ to $v$.
\end{proof}

However, verifying the existence of other paths from node $u$ to node $v$ is challenging due to its time complexity of $O(|E|+|V|)$. Baechi~\cite{jeon2020baechi} addresses this challenge by limiting the merged node's indegree or outdegree to 1, which avoids loops but can only merge the smaller number of nodes with the required degree. In contrast, we choose to merge the nodes in topological order and provide a guarantee to avoid loops based on the following lemma.

\begin{lemma}
\label{lemma:merging}
Merging two adjacent nodes $u$, $v$ in a topologically sorted queue does not create a loop.
\vspace{-0.1in}
\end{lemma}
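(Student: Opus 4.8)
The plan is to show that no directed cycle can pass through the fused vertex, since that is the only place a new cycle could arise (all edges not incident to $u$ or $v$ are left untouched by the merge, so any cycle avoiding the merged vertex would already have existed in the acyclic $G$). Write $x$ for the node obtained by fusing $u$ and $v$. In the merged graph $x$ inherits exactly the in- and out-neighbors of $u$ and $v$, so a cycle through $x$ has the form $x \rightarrow w_1 \rightarrow \cdots \rightarrow w_k \rightarrow x$ with each $w_i \notin \{u,v\}$. First I would translate this back to the original DAG: the first hop $x \rightarrow w_1$ corresponds to an edge $p \rightarrow w_1$ with $p \in \{u,v\}$, the last hop $w_k \rightarrow x$ to an edge $w_k \rightarrow q$ with $q \in \{u,v\}$, and the interior $w_1 \rightarrow \cdots \rightarrow w_k$ is a genuine directed path in $G$. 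Hence a new cycle exists if and only if $G$ contains a directed path from some $p \in \{u,v\}$ to some $q \in \{u,v\}$ whose interior avoids $u$ and $v$.

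Next I would dispose of the possible endpoint pairs $(p,q)$ by a short case analysis. The pairs $(u,u)$ and $(v,v)$ are impossible because $G$ is acyclic. The pair $(v,u)$ is impossible because a topological order places $u$ strictly before $v$ while every edge of $G$ points forward in that order, so no directed path can run from $v$ back to $u$. This leaves only $(u,v)$: a directed path $u \rightarrow w_1 \rightarrow \cdots \rightarrow w_k \rightarrow v$ with $k \geq 1$ interior vertices.

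The key step, which I expect to be the crux, is to rule out this surviving path using the adjacency hypothesis. Since every edge respects the topological order, the positions satisfy $\mathrm{pos}(u) < \mathrm{pos}(w_1) < \mathrm{pos}(v)$, so $w_1$ would sit strictly between $u$ and $v$ in the queue. But $u$ and $v$ are consecutive, so no such position exists, a contradiction; therefore no new cycle is created. The one delicate point to flag is that the lemma does not assume $(u,v)\in E$, so Theorem~\ref{theorem:merging} cannot be invoked verbatim; the positional argument above instead works directly and uniformly, covering both the case where the edge $(u,v)$ is present (where it simultaneously establishes the ``no other path'' hypothesis of Theorem~\ref{theorem:merging}) and the case where $u$ and $v$ are merely consecutive with no connecting edge.
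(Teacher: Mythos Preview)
Your argument is correct, and its crux---that any intermediate vertex on a $u$-to-$v$ path would have to occupy a topological position strictly between the consecutive positions of $u$ and $v$---is exactly the paper's own step. The paper's proof is terser: it tacitly assumes $(u,v)\in E$ and, via Theorem~\ref{theorem:merging}, only argues that no \emph{other} $u$-to-$v$ path exists; your explicit case analysis on $(p,q)$ and the observation that the positional argument works even when $(u,v)\notin E$ make your version more self-contained and slightly more general, but the underlying idea is the same.
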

\begin{proof}
Suppose there exists an edge $e = (u, v) \in E$ with topological order $O_u, O_v$ and if there is another path $(u \rightarrow v_1 \rightarrow v_2 \rightarrow  \cdots \rightarrow v_n \rightarrow v)$, then there must be $O_u > O_{v_1} > \cdots >O_v$, which is inconsistent with the adjacent order of $u$ and $v$. 
\end{proof}
 
\subsubsection{Optimal Operation Fusion.}
\label{subsubsec:cpd}

Because of the restriction that only topologically ordered adjacent nodes can be merged, the topological order directly affects the final results. Inspired by the task scheduling approach~\cite{wang2018list}, we propose an improved version of \emph{DFS-TOPO}, namely critical path DFS-TOPO (\emph{CPD-TOPO}). 

The critical path~\cite{wang2018list} is a concept in task scheduling algorithms that refers to the path from the source node to the sink node in a DAG, which has the longest overall time consumption of communication and computation. The source node indicates the node with indegree 0, and, the sink node is the node with outdegree 0. Specifically, we define top level $tlevel(v_i)$ as the longest path from any source node to node $v_i$ excluding the computation time of $v_i$, which can be recursively computed as follows.
\vspace{-0.05in}
\begin{equation}\label{eq:tlevel}
   {tlevel(v_i) = \begin{cases}\qquad {0}\qquad\qquad if\quad indegree(v_i) = 0 \  \\ \max\limits_{{ \forall v_p \in pred(v_i)}}(tlevel(v_p)+w_{p}+c_{p,i})\quad else\end{cases} }
\end{equation}
Where $pred(v_i)$ indicates the predecessors of $v_i$ and  $c_{p,i}$ refers to the communication time from node $v_p$ to node $v_i$, $w_p$ is the computation time of the node $v_p$. Note that $pred(v_i)$ includes only those nodes that transmit data to $v_i$, which means that $pred(v_i)$ has an outedge pointing to $v_i$. Similarly, we define bottom level $blevel(v_i)$ as the longest path from node $v_i$ to any sink node, including the computation time of the node $v_i$, as follows.
\begin{equation}\label{eq:blevel}
   {blevel(v_i) = \begin{cases}\qquad {w_i}\qquad\quad if\quad outdegree(v_i) = 0 \  \\ \max\limits_{{ \forall v_s \in succ(v_i)}}(blevel(v_s)+c_{i,s})+w_i\quad else  \end{cases} }
\end{equation}
Where $succ(v_i)$ indicates the successor nodes of $v_i$, that is, the nodes which are pointed to by the outedges of $v_i$. As such, $tlevel(v_i)+blevel(v_i)$ represents the longest path traversing node $v_i$. Let $cpath$ denote the overall time consumption,where $cpath(v_i) = tlevel(v_i)+blevel(v_i)$. The critical path can be chosen by selecting the node with the maximum $cpath$. 
The key idea is to select the closest nodes in topological order with the best effort to form the critical paths, to facilitate node merging afterwards to further reduce the lengths of critical paths (recall we only merge adjacent nodes in topological order). 
The motivation is that reducing the lengths of critical paths can effectively decrease the total DAG running time as shown by~\cite{wang2018list}.

For a DAG, we can preprocess the $tlevel$ and $blevel$ of all nodes before node merging. The overall flow of CPD-TOPO is: first, all nodes in the computational graph with indegree 0 are appended into a queue $Q$, and sorted according to the values of $cpath$ from largest to smallest. The first node $v_0$ at the head of $Q$ is selected as the order output of topology and removed from $Q$. The children nodes of $v_0$ are sorted according to the values of $cpath$ from smallest to largest, of which the one with indegree 0 is added to the head of $Q$ until all child nodes have been traversed. Then we continue to select the first node $v_0$ of $Q$. This process runs recursively until all nodes have been output. 
As a result, we can prioritize the node with the largest $cpath$ among the children of $v_0$, i.e., the node on the critical path.
\begin{algorithm}[t!]
        \caption{Optimal Operation Fusion}  
        \label{alg:Optimal Operation Fusion}
        \begin{algorithmic}[1] 
            \Require Computational Graph \emph{\textbf{G (V, E)}}, Devices 
            \Ensure Clusters of nodes

			\State $CPD\_Order\gets \textsc{CPD\_{Topo}}(G)$
			\State $v_1v_2...v_n\gets$ sort nodes of $G$ by increasing order of $CPD\_Order$
			\State $BreakPoint\gets \textsc{Optimal\_BP}(G,R,M, v_1...v_n)$ 
            \State $Clusters\gets$Divide $v_1...v_n$ according to breakpoints 
			\State \Return{$Clusters$}

			\Function{CPD\_{Topo}}{$G$}  
			\State for any $v \in V$, assign $tlevel$, $blevel$; calculate $Cpath$ values.
			%
			\State Create a queue $Q = \{ v | indegree(v) = 0, v \in V \}$.
			\State Sort $Q$ by decreasing order of $Cpath$ values.
			\State $CurrOrder\gets 0$
			\While{$Q$ is not empty} 
			\State $CurrNode\gets Q[0].dequeue()$ \Comment{Dequeue the head node}
			\State $CPD\_Order[CurrNode]\gets CurrOrder$ 
			\State $CurrOrder\gets CurrOrder+1$  
			\State $Children \gets successors(CurrNode)$ 
			\State Sort $Children$ by increasing order of $Cpath$ values.
			\State Delete all outEdges of $Currnode$
			\For{$Child$ in $Children$}
			\If {$indegree(Child)=0$}
			\State $Q.append\_to\_head(Child)$
			\EndIf
			\EndFor
			\EndWhile
			\State \Return{$CPD\_Order$}  
			\EndFunction  
		\Function{Optimal\_BP}{$G,R,M, v_1...v_n$}\Comment{Find the optimal breakpoints}
			\State $T(0),P(0)=0$
			\For{$v_j$ in $V$}
			\State $S(v_j)\gets $ Equation~\ref{eq:sum-cost} 
			\State $P(v_j)\gets v_i$
			\EndFor
			\State $v_k \gets v_n$, $BreakPoint\gets[]$
			\While{$v_k>0$}
			\State $v_k=P(v_k)$
			\State $BreakPoint\gets BreakPoint\cup v_k$
			\EndWhile
			\State \Return{$BreakPoint$} 
			\EndFunction
		\end{algorithmic}  
\end{algorithm}  


Sorting the nodes produced by CPD-TOPO by increasing order, we construct a sequence of nodes ($v_1>v_2>...>v_n$). Next, we utilize a graph partition algorithm namely Kernighan's Algorithm, which allows operation Fusion to reduce the maximum communication consumption. As shown in~\cite{kernighan1971optimal}, Kernighan's Algorithm can find the optimal splitting points for DAGs by traversing the nodes sequentially with a dynamic programming approach. We call some nodes as breakpoints and the nodes between adjacent breakpoints are merged into clusters. We aim to find proper breakpoints with which the merged clusters have the minimum inter-cluster communication cost. We define $cost(v_i,v_j)$ as the communication cost added by breakpoint $v_j$. In particular, assume $v_i$ is the breakpoint preceding to $v_j$, and the set of nodes between $v_i$ and $v_j$ is $set[v_i,v_j]=\{v_i,v_{i+1}.... ,v_{j-1}\}$. Then $cost(v_i,v_j)$ is the sum of the communication cost between $set[v_i,v_j]$ and the nodes behind $v_j$ (including $v_j$).
\vspace{-0.05in}
\begin{equation}\label{eq:cost}
   {cost(v_i,v_j) = \sum\limits_{i \le k \textless j, j \le m \le n}{c_{k,m}}\quad  }
\end{equation}
$cost(v_i,v_j)$ can be calculated recursively as follows:
\begin{equation}\label{eq:cost}
   {cost(v_i,v_j) = cost(v_i,v_{j-1})+\sum\limits_{j \le m \le n}{c_{{j-1},m}}-\sum\limits_{y\le  s\textless j-1}{c_{{s},{j-1}}} }
\end{equation}
Let $S(v_j)$ denote the sum of the communication cost of all the preceding nodes of breakpoint $v_j$, calculated as follows.
\begin{align}\label{eq:sum-cost}
   &S(v_j) = \min\limits_{v_i}{(S(v_i)+cost(v_i,v_j))}\quad  \\
  & \begin{array}{r@{\quad}r@{}l@{\quad}l}
s.t.&i\geq& j-R, \quad i=j-1,j-2\ldots  \notag\\
& \sum\limits_{\forall i\in [i,j)}&{Memory[v_i]}\leq M  \notag\\
\end{array}
\end{align}
Where $R$ is a parameter which determines the range of exploration, and $M$ denotes the memory limit we set for the device. Let $P(v_j)$ record the preceding breakpoint $v_i$ chosen by $v_j$.

When all the nodes have been traversed, we get all $P(v_i)$ ($1\leq i \leq n$), and $S(v_n)$. By searching $P(P(v_n))$ recursively ($P(P(v_n))$, then $P(P(P(v_n)))$, ...), we can retrieve the merged clusters by cutting the node sequence according to the breakpoints. The process of getting the breakpoints is summarized in the Optimal\_BP function of Algorithm \ref{alg:Optimal Operation Fusion}.

Algorithm \ref{alg:Optimal Operation Fusion} summarizes Optimal Operation Fusion~(Optimal Operation Fusion). Optimal Operation Fusion is able to merge as many adjacent nodes on the critical path as possible into the same cluster to reduce communication costs. Optimal Operation Fusion can also effectively reduce the number of nodes by up to 165 times and CCR by up to 10 times to significantly accelerate the placement of the large model, as shown in Table \ref{tab:ccr}.

\begin{table}[!t]
\vspace{-0.1in}
\caption{The number of nodes and \textbf{CCR} of baseline models. }
\label{tab:ccr}
\vspace{-0.1in}
\small{
\begin{tabular}{|c|cc|cc|l}
\cline{1-5}
\multirow{2}{*}{Model}                                       & \multicolumn{2}{c|}{Original}                & \multicolumn{2}{c|}{After Merging}         &  \\ \cline{2-5}
                                                             & \multicolumn{1}{c|}{Num. of Nodes} & CCR     & \multicolumn{1}{c|}{Num. of Nodes} & CCR   &  \\ \cline{1-5}
Inception\_V3                                                & \multicolumn{1}{c|}{6332}          & 33.118  & \multicolumn{1}{c|}{59}            & 8.137 &  \\ \cline{1-5}
NMT                                                          & \multicolumn{1}{c|}{25463}         & 18.630  & \multicolumn{1}{c|}{154}           & 6.372 &  \\ \cline{1-5}
Transformer                                                  & \multicolumn{1}{c|}{36352}         & 111.957 & \multicolumn{1}{c|}{220}           & 11.10 &  \\ \cline{1-5}
\begin{tabular}[c]{@{}c@{}}Tensor \\ Holography\end{tabular} & \multicolumn{1}{c|}{3810}          & 49.147  & \multicolumn{1}{c|}{37}            & 7.438 &  \\ \cline{1-5}
\end{tabular}
} 
\vspace{-0.2in}
\end{table}

Since both $tlevel$ and $blevel$ in CPD-TOPO can be pre-processed and the order of each node needs only once calculation, the time complexity of CPD-TOPO is $O(|V|+|E|)$. The complexity of computing breakpoints is linearly correlated to the number of edges in the graph, i.e., $O(|E|)$~\cite{kernighan1971optimal}. Thus the time complexity of Optimal Operation Fusion is $O(|V|+|E|)$. Nevertheless, if the exploration range $R$ is set to a large number, the process would take a long time, especially for large computational graphs in practice. 
To balance the trade-off between Celeritas's running time and the model's training time, we choose the values of exploration range parameter $R$ and memory per GPU $M$ carefully. More specifically, a too large exploration range results in a too coarse merged graph, which impacts the flexibility of placement. Meanwhile, a too small range merges too few nodes per time thus requiring a long time to generate a placement strategy. Therefore, we empirically chose $R = 200$ and $M=0.25*$ in the evaluations (see Section~\ref{sec:experiments} for details).

\subsection{Adjusting Placement}
\label{subsec:placement}

Optimal Operation Fusion can effectively merge nodes into clusters to reduce communication costs. Taking the clusters as nodes and inter-cluster links as edges, we can get a coarse-grain computational graph, which is also a DAG thanks to the loop avoidance provided by Optimal Operation Fusion (Section~\ref{subsubsec:avoid}). 
The coarse-grain computational graph has considerably fewer nodes and edges than the original graph. Placing the nodes of the coarse-grain graph by simply following sequential CPD-TOPO order (which we refer to as \emph{Order-Place}) can yield  placements superior to Baechi, as shown in Table~\ref{tab:sum}, because \emph{Order-Place} minimizes the communication cost between coarsening graph nodes.

To further improve efficiency, we propose an adaptive placement algorithm namely \emph{Adjusting Placement} as shown in Algorithm~\ref{alg:PLACE}. The main idea is to place the current node on the same device as the previous one unless otherwise reduces the total running time. 
More specifically, sequential node placement may lead to a situation where all predecessors ($pred(v_i)$) of $v_i$ already finished their computation, while some other nodes are still running on the device $d_i$. Let $pre\_t(v_i)$ denote the latest completion time of $pred(v_i)$. It takes $t_w$ time for the device to become idle, therefore the node $v_i$ has to wait for $t_w$ before it can start running on $d_i$. As the sequential placement results are good enough, we conservatively assume that if $t_w$ is larger than the communication time consumption added by placement on other devices, we can place it on another device with the earliest start time ($EST$).

Adjusting Placement first sorts the nodes of the coarse graph according to CPD-TOPO and gets the sequence $V 1>V 2>...>V n$. Afterwards, the nodes are processed in sequential order. For the current node $v_i$, we compute its $EST$ on each device. When trying to place $v_i$ on device $d_i$:
\begin{equation}\label{eq:EST}
\begin{split}
   pre\_t(v_i) &= \max\limits_{\forall v_{p} \in pred(v_i)}{Fin\_t[v_{p}]+I(d_{p})c_{p,i}}\\
   I(d_{p})&= \begin{cases}\qquad {0}\qquad if\quad d_{p} = d_i \  \\ \qquad{1}\qquad else\end{cases}\\ 
   EST(d_k)&=max(pre\_t(v_i),d\_ava[d_k])
\end{split}
\end{equation}
Where $Fin\_t[v_{p}]$ is the actual completion time of the $v_p$, $d\_ava[d_k]$ is the earliest available idle time for device $d_k$. Since the node sequence follows a topological order, all predecessor nodes are already placed when scheduling $v_i$. 
If $v_i$ and $v_{p}$ are placed on the same device, i.e., $d\_i=d\_p$, there is no communication time required in between. Otherwise, the time of data transmission $c_{p, i}$ is added. By traversing all the predecessors of $v_i$, we can get $pre\_t(v_i)$. Note that $d\_ava[d_k]$ can not simply be set as the time when the last node on $d_k$ finished its operation. Instead, Adjusting Placement checks the load on $d_k$ to see if there is a time slot to insert $v_i$ after $pre\_t(v_i)$, and set $d\_ava[d_k]$ to the start time of that slot. $EST$ gets the larger value between $d\_ava[d_k]$ and $pre\_t(v_i)$, as we must wait for all the predecessors to complete operations and data transmissions as well as $d_i$ becomes idle. If the memory of a device is fully occupied, we set its $EST$ to infinity in order to move on to the next device. The definition assumes that the previous node of $v_i$, $v_{i-1}$, is placed on device $d_k$. Let $back\_cost$ denote the maximum time required to send data back to device $d_k$ from other devices.
\begin{equation}\label{eq:EST}
   back\_cost= \max\limits_{\forall v_{s} \in succ(v_i)}{c_{i,s}} 
\end{equation}
We consider $d_i$ a better option than $d_k$ for placing $v_i$ if the following condition is satisfied.
\begin{equation}\label{eq:EST}
   EST(d_k)-EST(d_i)> back\_cost \qquad \forall d_k, d_i \in D
\end{equation}
Then node $v_i$ is placed on the device with the earliest EST, otherwise, $v_i$ is placed on $d_k$. If the memory of $d_k$ is fully occupied, the device with sufficient memory at the earliest EST is selected. If all devices are out of memory, a best-effort strategy is adopted, i.e., the device with the lowest memory usage is selected.
It is able to prove that each adjustment step of Adjusting Placement can reduce the running time or at least keep the same with \emph{Order-Place}.

\begin{algorithm}[t!]
        \caption{Adjusting Placement}  
        \label{alg:PLACE}
        \begin{algorithmic}[1] 
            \Require Coarse Graph \emph{\textbf{G' (V', E')}}, Original Computational Graph \emph{\textbf{G (V, E)}}, Devices $D$
            \Ensure Placement of G
            
			\State $CPD\_Order \gets \textsc{CPD\_{Topo}}(G')$  
		    \State $v'_1, v'_2, \cdots, v'_n \gets $ sort nodes of $V'$ by increasing order of $CPD\_Order$ 
		    \State $Ava\_m\gets$ Available memory of each device
			\For{$v_i$ in $v'_1, v'_2, \cdots, v'_n$}
			\State $BackCost \gets \max\limits_{\forall v_{s} \in succ(v_i)}{c_{i,s}}$

            \For {$d_i$ in devices $D$}:
            \If {$Ava\_m[d_i]<Memory[v_i]$}
            \State EST($d_i$)$\gets +\infty$
			\Else 
			\State EST($d_i$)$\gets$ \textsc{comp\_{EST}}($G, d_i, v_i$) 
		    \EndIf
			\EndFor
			\State $d_k\gets Placement[v_{i-1}]$
		    \State $d_1, d_2, \cdots, d_m\gets$ sort devices by increasing order of $EST$ 
			\If { $EST(d_k)-EST(d_1) >back\_cost$}  
			\State $Placement[v_i]=d_1$
			\ElsIf {$EST(d_k)< +\infty$}
			\State $Placement[v_i]=d_k$
			\Else \Comment{All devices are out of memory}
			\State $Placement[v_i] \gets \mathop{\arg\min}\limits_{d_i}Ava\_m[d_i]$
			\EndIf
			\State $Ava\_m[Placement[v_i]]\gets Ava\_m[Placement[v_i]]-Memory[v_i]$
			\State Update $Fin\_t[v_{i}]$
			\EndFor
		    \State $G\_Placement\gets$ Assign V to the devices with $Placement$ and the mapping between $V'$ and $V$
			\State \Return{$G\_Placement$}  

			\Function{comp\_EST}{$G, d_i, v_i$}  
			\State $pre\_t(v_i)\gets 0$
			\For{$v_{p}$ $\in$ $predecessors(X_i)$}  
			\If {$d_{v_{p}} = d_i$}
			\State $pre\_t(v_i)=\max{(pre\_t(v_i),Fin\_t[v_{p}])}$
			\Else
			\State $pre\_t(v_i)=\max{(pre\_t(v_i),Fin\_t[v_{p}]+c_{p,i})}$
			\EndIf
			\EndFor
			\State $d\_ava[d_i]\gets$ earliest time $v_i$ can be placed on $d_i$
			\State $EST(d_i)=max(pre\_t(v_i),d\_ava[d_i])$  
			\State \Return{$EST(d_i)$}  
			\EndFunction  
		\end{algorithmic}  
\end{algorithm}

\textbf{Proof}: Assuming that we use Order-Place to place the current node $v_i$ and the start computation time of $v_i$ is $t\_order$. Then the earliest start time of all the successors ($succ(v_i)$) is $t\_order+w_i$, where $w_i$ is the time required for $v_i$ to run. If Adjusting Placement chose to place $v_i$ on another device $d_m$, then we must have $EST(d_k)-EST(d_m) >
 back\_cost$. At this point the earliest start time of the successors of $v_i$ can be represented as $t\_{adjust}$. We have:
 \begin{equation*}
\begin{split}
t\_{adjust}&<EST(d_m)+back\_cost+w_{i}\\
&<EST(d_k)+w_{i}\\
&<=t\_order+w_{i}
\end{split}
\end{equation*}
 
Since Adjusting Placement only traverses each node once and checks every incoming and outgoing edge, its time complexity is $O(|V|+|E|)$. Because the coarse graph has far fewer nodes than the original graph, i.e., a few hundred compared to thousands as shown in Table~\ref{tab:ccr}, Adjusting Placement runs very fast.

\definecolor{Gray}{gray}{0.9}

\begin{table*}[!t]
\caption{The average single-step training time (lower is better) in seconds on 4 GPUs.}
\label{tab:sum}
\vspace{-0.15in}
\small{
\begin{tabular}{|c|c|ccc|c|c|c|c|l}
\cline{1-9}
\multirow{2}{*}{Model} & \multicolumn{1}{l|}{\multirow{2}{*}{Metis}} & \multicolumn{3}{c|}{Baechi}                                                              & \multirow{2}{*}{HRL} & \multirow{2}{*}{Order-Place} & \multirow{2}{*}{\textbf{Celeritas}} & \multirow{2}{*}{Speed up} &  \\ \cline{3-5}
                       & \multicolumn{1}{l|}{}                       & \multicolumn{1}{c|}{m-TOPO} & \multicolumn{1}{c|}{m-ETF} & \multicolumn{1}{l|}{m-SCT} &                                 &                               &                          &                           &  \\ \cline{1-9}
Inception\_V3          & OOM                                         & \multicolumn{1}{c|}{3.100}   & \multicolumn{1}{c|}{10.548} & 7.610                       & 8.565                           & 2.931                         &\textbf{2.859}                    & >7.8\%                     &  \\ \cline{1-9}
NMT                    & OOM                                         & \multicolumn{1}{c|}{OOM}     & \multicolumn{1}{c|}{OOM}    & OOM                         & 10.224                          & 9.117                         & \textbf{9.099}                    & >11.0\%                    &  \\ \cline{1-9}
Transformer            & OOM                                         & \multicolumn{1}{c|}{OOM}     & \multicolumn{1}{c|}{8.613}  & 8.411                       & 42.837                          & 6.935                         & \textbf{6.536}                    & >22.3\%                    &  \\ \cline{1-9}
Tensor Holography            & 6.591                                       & \multicolumn{1}{c|}{2.220}   & \multicolumn{1}{c|}{6.376}  & 6.013                       & 3.683                           & 1.520                         & \textbf{1.461}                    & >34.2\%                    &  \\ \cline{1-9}
\end{tabular}
} 
\end{table*}

\section{EXPERIMENTS}
\label{sec:experiments}

\subsection{Implementation}
\label{sec:implementation}

We implement Celeritas with the TensorFlow framework and use TensorFlow's profiler module to monitor the computational graph (for instance, obtaining the operation start time and the output tensors). 
Moreover, TensorFlow has co-location restrictions, i.e., nodes within a co-location group need to be placed on the same device. We adopt a scheme similar to Baechi, namely, if the first node of a co-location group is placed on device $d$, then all nodes of this group are placed on $d$.

We also consider potential congestion caused by simultaneous communications with multiple devices for each device. Inspired by the mechanism deployed by TensorFlow for communications between multiple devices~\cite{abadi2016TensorFlow}, we treat the data transmission process as an additional task, by adding a new operation node for data transmission on the devices which need to send or receive data. 

\subsection{ Experimental Setup}

\subsubsection{Testbed setup.} 
We deployed Celeritas on a server with 2 Intel Xeon Gold 5220 CPUs and 4 NVIDIA Tesla V100 GPUs (each has 32GB memory). GPUs are interconnected via the PCIe bus. Our experiments are based on TensorFlow 1.12.

As mentioned earlier, model parallelism is mostly used in cases where a model exceeds the memory of a single GPU. We find that previous works focus most of their efforts on optimizing small models. To evaluate the advantages of Celeritas, we adopt large models with large batch size settings in the experiments.

\vskip 0.02in \noindent \textbf{Inception\_V3}~\cite{szegedy2016rethinking} is a CNN model well known for tasks such as image classification. 
We set its batch size to $512$, which exceeds a single GPU's capacity.

\vskip 0.02in \noindent  \textbf{NMT}
~\cite{wu2016google} is a sequence-to-sequence model with multiple LSTM units and is mostly used for machine translation. We use a 4-layer model variant of NMT with 2048 hidden units per LSTM and the batch size is also set to $512$.

\vskip 0.02in \noindent  \textbf{Transformer}~\cite{vaswani2017attention} is another well known attention-based NLP model consisting of an encoder and a decoder. We use a Transformer variant with $12$ layers and $16$ heads containing $2048$ hidden units. We set its batch size to $256$.

\vskip 0.02in \noindent  \textbf{Tensor Holography}~\cite{shi2021towards} is designed to synthesize realistic color 3D holograms from a single RGB depth image. It is composed of $30$ convolutional layers and $24$ filters per layer. The authors trained it for $84$ hours on an NVIDIA Tesla V100 GPU to get the final result. It has a huge footprint of memory usage, which requires 30 GB of GPU memory even with a batch size of $16$. We set its batch size to $32$ in our experiments.

\vspace{-0.05in}
\subsubsection{Baseline.}
To evaluate Celeritas, we compared our results with the following baselines.  

\vskip 0.02in \noindent \textbf{Metis}~\cite{karypis1997metis} is widely used in the field of graph partitioning. It attempts to minimize edge partitioning using k-way partitioning while satisfying a balance limit associated with the weights. 

\vskip 0.02in \noindent \textbf{HRL}~\cite{mirhoseini2018hierarchical} generates placement strategies with an RL-based approach. A feedforward neural network assigns groups of operations and then an LSTM-based model is applied to produce a placement strategy for each group. The resulting strategy is measured on the actual device to obtain the running time, which is used as the reward for optimization. According to~\cite{lan2021accelerated}, HRL can find the optimal placement within 5 hours, and we use this setup in our experiments.

\vskip 0.02in \noindent \textbf{Baechi}~\cite{jeon2020baechi} proposes a series of heuristics algorithms, including m-TOPO, m-ETF, and m-SCT, of which m-SCT performs the best. m-ETF places the operations on the device with the earliest start time. m-SCT improves m-ETF by using integer linear programming to find the operation's most favorite child. Since Baechi lacks the initial information measurement for large models, we utilized Celeritas's Standard Evaluations instead throughout the experiments.

For comparison, we also add \textbf{Order-Place}, as described in Section~\ref{subsec:placement}. More specifically, for the coarse graphs created by operation fusion, we place nodes in CPD-TOPO order on one device until reaching its memory limit, then move on to the next device.

\vspace{-0.1in}
\subsection{Evaluation Methodology}

We adopt a set of performance metrics to evaluate the algorithms.

\vskip 0.02in \noindent \textbf{Single-step Time.} 
We run the model according to the placement strategy generated by the methods and measure single-step training time. As suggested in \cite{mirhoseini2018hierarchical}, we discard the running time of the first five steps in the measurement, in order to avoid the variations during warming up. Different from \cite{mirhoseini2018hierarchical}, we measure the average training time for the next fifty steps rather than just the first five for more robust results. 

\vskip 0.02in \noindent \textbf{Placement Generation Time.} 
We measure the time each method took to generate the final placement, excluding the time of Standard Evaluation and the final measurement of the placement that is required by all approaches.

Additionally, since Standard Evaluations play a key role in determining the placement strategies, and none of the previous works have considered Standard Evaluations, we evaluate the efficiency of Standard Evaluations using the following two metrics.

\vskip 0.02in \noindent \textbf{Estimation Accuracy.} 
We evaluate the accuracy of the memory consumption estimation by comparing the gap between the computational graph information generated by Celeritas estimation and the real measurements.
More specifically, for each node in the computational graph, we calculate the relative deviation between the estimated information $d\_e$ and the actual measured information $d\_a$, i.e., $| d\_e- d\_a|/ d\_a$. 
%

\vskip 0.02in \noindent \textbf{Measurement Time.} 
We also evaluate the impact of different placement methods on the Standard Evaluation time (namely, the time to complete the Standard Evaluation). We compare m-TOPO and DFS-TOPO defined in Section~\ref{subsec:initial} (Celeritas employs CPD-TOPO, an augmented version of DFS-TOPO). To avoid OOM, we also add a best-effort strategy for both methods: if all devices are out of memory, the device with the lowest memory usage is selected. 
Note that we discard the running times of the first 5 steps and instead use the average measurement time of the next 50 steps.

\subsection{Performance}

\subsubsection{Single-step Time.} 
The results of step times for all algorithms are shown in Table \ref{tab:sum}.
The column of \emph{speed up} shows the acceleration Celeritas provides compared to the second best alternative. Celeritas outperforms all baselines on all tested models with considerable improvements.

We make the following observations.
First, Inception\_V3 has fewer nodes than most others but still requires memory more than a single GPU's capacity. 
The resulting OOM has plagued HRL in all experiments. Celeritas is 62.4\% faster than m-SCT in single-step time and 7.8\% faster than m-TOPO, the best method in Baechi. 

Second, NMT demands the largest memory among the tested models. HRL achieves a satisfactory result through many attempts. M-SCT and m-ETF encounters OOM during placement. Celeritas successfully avoids OOM thanks to its best-effort placement strategy, i.e., assigning the nodes that cannot find the optimal placement to the device with the lowest memory usage. Celeritas's single-step training time is 11.0\% faster than HRL.

Third, Transformer has a large number of nodes, which brings a significant challenge for optimization. For example, HRL required a super long single-step running time. Celeritas benefits from the node merging process and is able to get the placement 22.3\% faster than m-SCT. Meanwhile, Adjusting Placement gives Celeritas a 5.8\% improvement compared to Order-Place. 

Last, Tensor Holography contains fewer nodes but also consumes a larger amount of memory. Metis prevents OOM but does not perform as well as others. Celeritas reduces single-step time by 34.2\% compared to the second best performed algorithm, m-TOPO.

\begin{figure*}[t!]
\begin{minipage}[][][b]{0.66\textwidth}
    \centering
     \begin{subfigure}{0.45\textwidth}
        \includegraphics[width=\linewidth]{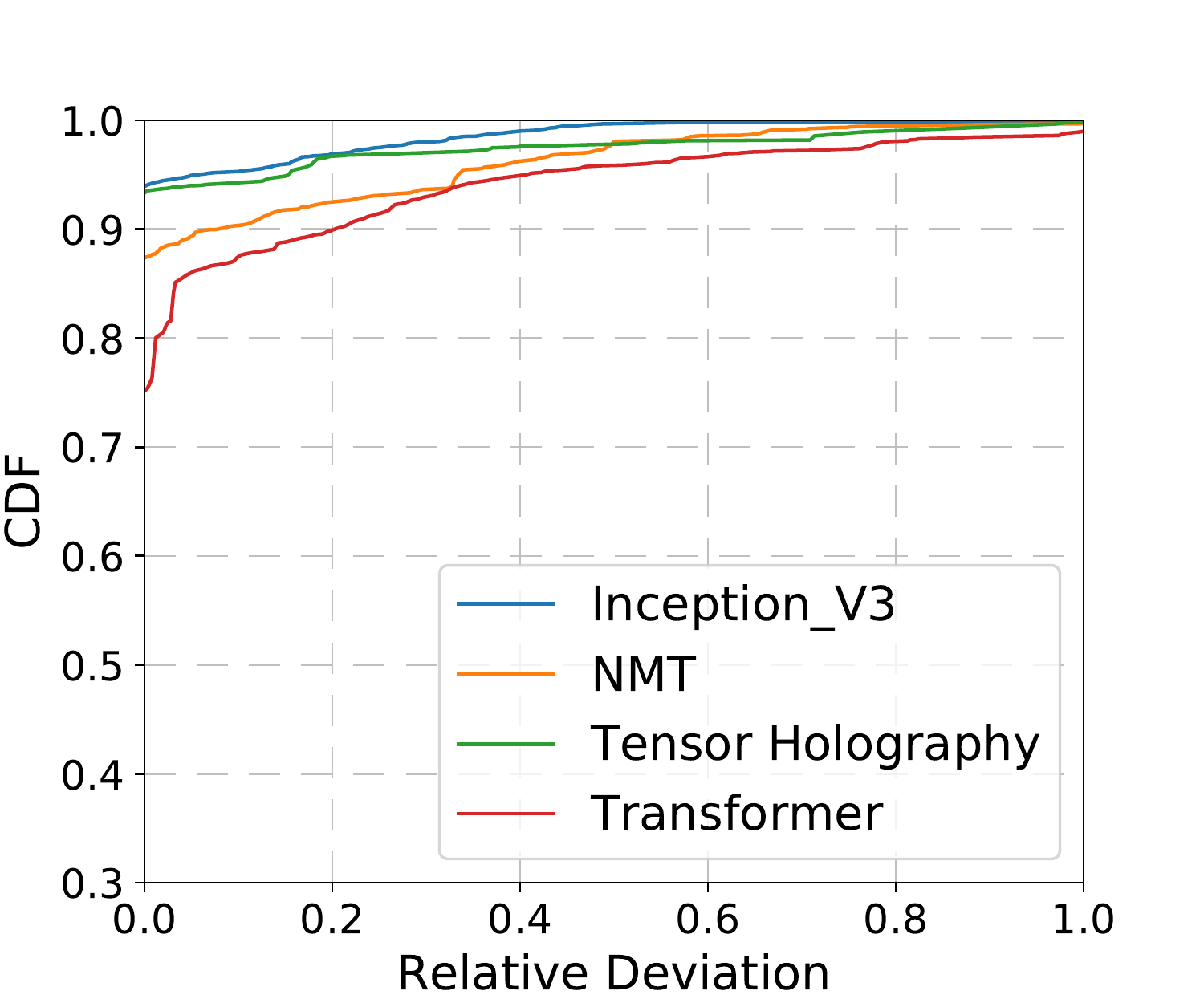}
         \caption{Memory Deviation to CDF}
         \label{fig:cdf-mem}
     \end{subfigure}
     \begin{subfigure}{0.45\textwidth}
         \includegraphics[width=\linewidth]{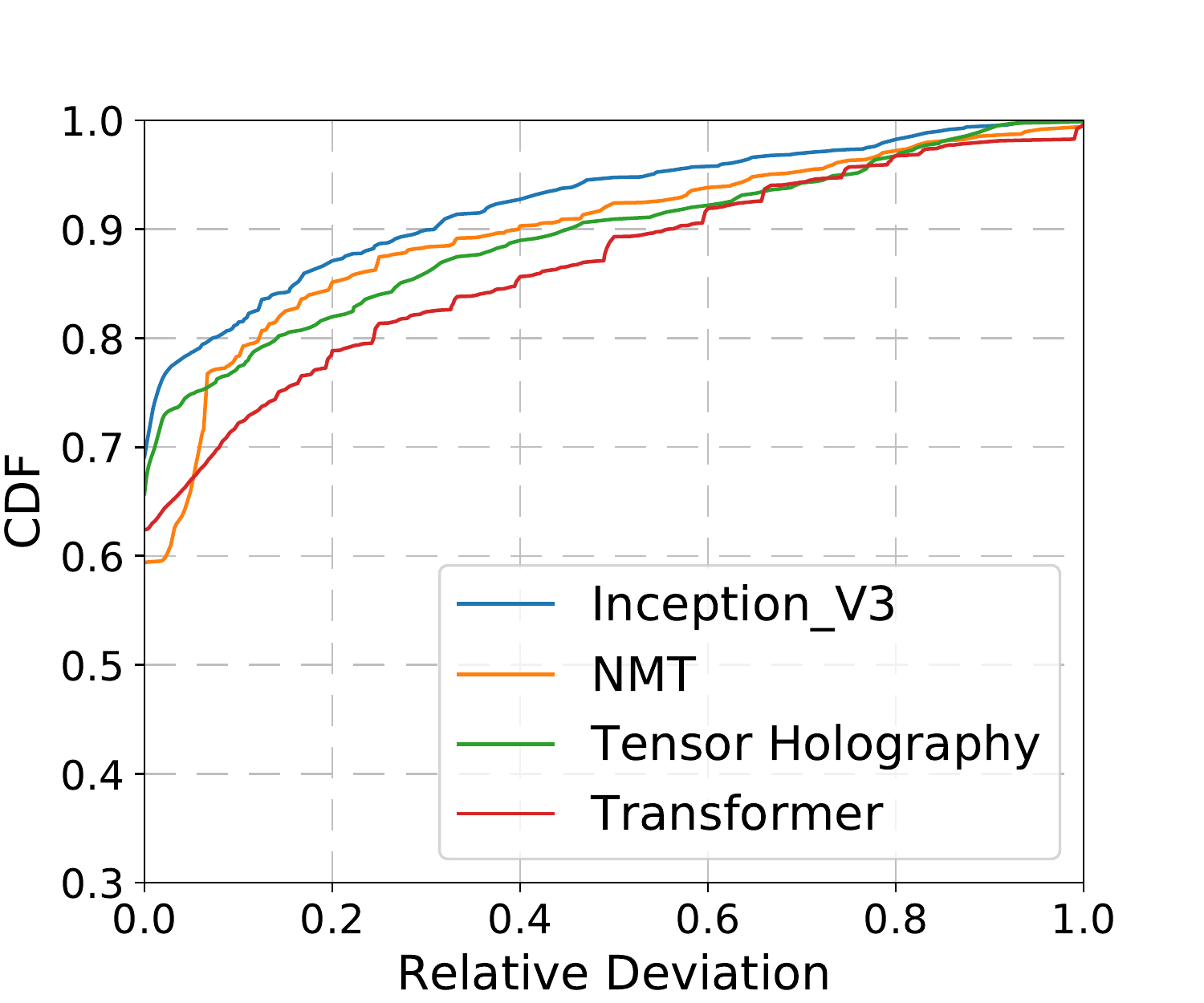}
    \caption{Time Deviation to CDF}
    \label{fig:cdf-time}
    \end{subfigure}
    \vspace{-0.1in}
    \caption{CDF of the relative deviation between estimated and accurate information.}
    \label{fig:cdf}
\end{minipage}
\begin{minipage}[][][b]{0.33\textwidth}
    \vspace{0.2in}
    \includegraphics[width=\textwidth]{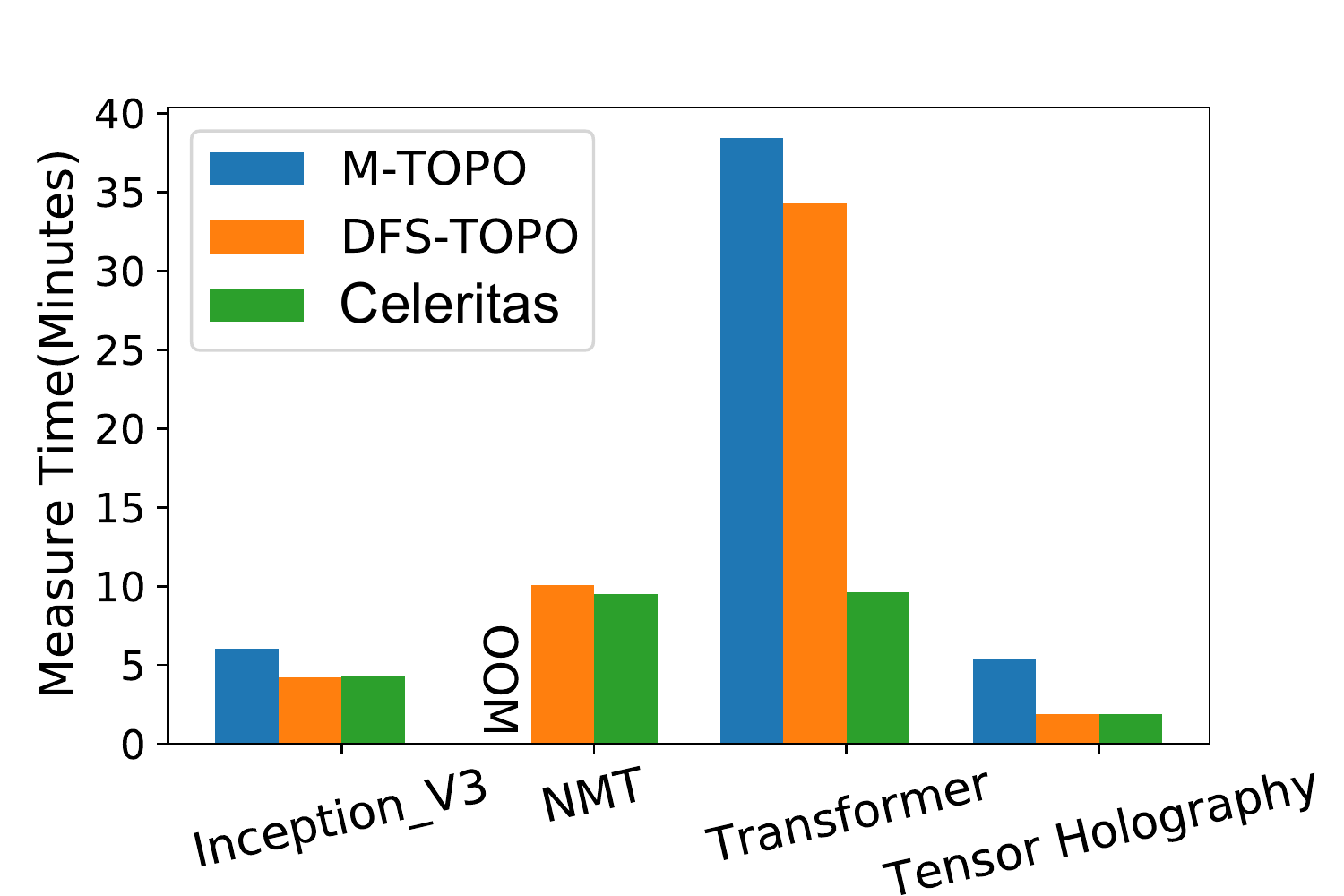}
    \vspace{0.01in}
    \caption{Standard Evaluation time.}
    \label{fig:measure}
\end{minipage}
\vspace{-0.15in}
\end{figure*}


\vspace{-0.05in}
\subsubsection{Placement Generation Time.} 
Table \ref{tab:time} summarizes the results.
Note that for HRL, we utilized 5 hours reported by Mars~\cite{lan2021accelerated} for training, 
and for Baechi, we chose m-SCT which reported the best results in~\cite{jeon2020baechi}.

\begin{table}[!t]
\vspace{-0.1in}
\caption{Time (seconds) for placement strategy generation.}
\label{tab:time}
\vspace{-0.1in}
\small{
\begin{tabular}{|c|c|c|c|c|l}
\cline{1-5}
Model         & m-SCT  & HRL & Order-Place & \textbf{Celeritas} &  \\ \cline{1-5}
Inception\_V3 & 8.591   & 5 hrs        & 5.439        & \textbf{7.183}   &  \\ \cline{1-5}
NMT           & -       & 5 hrs        & 54.695       & \textbf{59.418}  &  \\ \cline{1-5}
Transformer   & 129.498 & 5 hrs        & 92.639       & \textbf{95.304}  &  \\ \cline{1-5}
Tensor Holography     & 2.968   & 5 hrs        & 2.711        & \textbf{3.586}   &  \\
\cline{1-5}
\end{tabular}
} 
\vspace{-0.1in}
\end{table}

We observe that for the models with fewer nodes such as Inception\_V3, m-SCT runs reasonably fast but Celeritas still generates the placement 16.4\% faster, 
and for Tensor holography, Celeritas takes 0.618\,s more than m-SCT, mainly due to the time of coarse graph generation. 
When the number of nodes increases, the running time also goes up. 
For placing NMT, m-SCT encounters OOM while Celeritas only takes 59.418\,s. Placing Transformer is challenging due to its 30,000+ nodes, which takes m-SCT 129.498\,s while Celeritas is 26.4\% faster, thanks to its node merging method.
We notice that, compared to Order-Place, Celeritas takes additional 2.5\,s on average, which is mainly caused by the processing time of coarse graph generation. 

\subsection{Standard Evaluation}

\subsubsection{Estimation Accuracy.}

Table~\ref{tab:bias} summarizes the average deviation results. 
On models with fewer nodes like Inception\_V3 and Tensor Holography, the average deviations of the memory estimation are within 3\%. 
The deviation for NMT rises slightly to 3.93\%. Due to the numerous nodes and complex structure of the Transformer, the average memory deviation goes up to 6.02\%. 
The estimation of the running time of the nodes is a bit harder because the correlation between running time and batch size is not quite clear. However, the results are acceptable. 
On Inception\_V3, the average time estimation deviation is only 7.99\%. And the deviation can be controlled to about 10\% on NMT and Tensor Holography. For the most complex model, Transformer, the time estimation deviation is only 14.16\%.

To further analyze the results in more detail, we plot the CDFs of the estimated relative deviations in Figure~\ref{fig:cdf}. We make the following observations.
First, as shown in Figure~\ref{fig:cdf-mem}, the CDF corresponding to 0\% relative deviation of memory estimation is high, indicating that our estimation is accurate for most nodes. 
For simple computational graphs like Inception\_V3 and Tensor Holography, the ratio of accurately estimated nodes is $>90\%$. And that ratio for NMT and Transformer is also $>70\%$. For all models, the relative deviation of memory estimation is within 20\% for 90\% of the nodes. 
Second, Figure~\ref{fig:cdf-time} shows the relative deviation of the running time estimation. The ratio of nodes with accurate time estimation is more than 60\% for all models, except for NMT, which is slightly lower. The deviation is within 30\% for 80\% for all models.

\begin{table}[!t]
\vspace{-0.1in}
\caption{Average relative deviation between estimated and accurate information for all nodes.}
\label{tab:bias}
\vspace{-0.15in}
\small{
\begin{tabular}{ccccc}
\hline
\multicolumn{1}{|c|}{}                                                                  & \multicolumn{1}{c|}{Inception\_V3} & \multicolumn{1}{c|}{NMT}     & \multicolumn{1}{c|}{Transformer} & \multicolumn{1}{c|}{\begin{tabular}[c]{@{}c@{}}Tensor \\ Holography\end{tabular}} \\ \hline
\multicolumn{1}{|c|}{Mem Dev.} & \multicolumn{1}{c|}{1.54\%}        & \multicolumn{1}{c|}{3.93\%}  & \multicolumn{1}{c|}{6.02\%}      & \multicolumn{1}{c|}{2.92\%}                                                       \\ \hline
\multicolumn{1}{|c|}{Time Dev.}   & \multicolumn{1}{c|}{7.99\%}        & \multicolumn{1}{c|}{10.85\%} & \multicolumn{1}{c|}{14.16\%}     & \multicolumn{1}{c|}{11.32\%}                                                      \\ \hline
\multicolumn{1}{l}{}                                                                    & \multicolumn{1}{l}{}               & \multicolumn{1}{l}{}         & \multicolumn{1}{l}{}             & \multicolumn{1}{l}{}                                                              \\
\multicolumn{1}{l}{}                                                                    & \multicolumn{1}{l}{}               & \multicolumn{1}{l}{}         & \multicolumn{1}{l}{}             & \multicolumn{1}{l}{}                                                             
\end{tabular}
} 
\vspace{-0.52in}
\end{table}

\vspace{-0.1in}
\subsubsection{Measurement Time.}

Figure~\ref{fig:measure} shows the measurement times for all methods.
For Inception\_V3, m-TOPO takes 6.0 minutes to complete the measurement, while DFS-TOPO takes only 4.22 minutes. 
Celeritas takes 6 seconds longer than DFS-TOPO because it has more processes such as merging, building coarse maps, and placing, 
while DFS-TOPO only conducts sequential placement. Similarly for Tensor Holography, Celeritas is 1 second slower than DFS-TOPO and 65.1\% faster than m-TOPO. On NMT, m-TOPO encountered OOM, which is mainly due to the topological order employed by m-TOPO and the co-location restrictions mentioned in Section~\ref{sec:implementation}, similarly to the problem m-SCT and m-ETF shown in Table~\ref{tab:ccr}. We find that nodes in the same co-location group are mostly neighbor nodes that would be placed on the same device. However, the BFS-like topological order adopted by m-TOPO separates the neighbor nodes far apart. 
M-TOPO calculates only the memory usage of the first node in the co-location group while ignoring others. Hence it can allocate too many groups of nodes connected to the neighbor nodes on a single GPU. DFS-TOPO takes only 10.07 minutes on NMT and Celeritas is even faster which takes 9.50 minutes. For the most complex Transformer, simple strategies like m-TOPO and DFS-TOPO both incur massive communication costs, taking 38.45 and 34.3 minutes to complete the measurement, respectively.  
Thanks to Optimal Operation Fusion and Adjusting Placement, Celeritas takes only 9.58 minutes, which is less than a third of DFS-TOPO.

\section{Conclusion}

In this paper, we present Celeritas, a fast and efficient model parallelism solution for model placement on multiple devices. In particular, we consider the problem of Standard Evaluation for large models and propose an efficient and complete method. We have implemented Celeritas for multiple representative large models and demonstrated its significant improvements in model training time by up to 34.2\% faster, and 26.4\% shorter placement time compared to most advanced existing approaches. Thus Celeritas significantly improves the availability and flexibility of model parallelism.

\bibliographystyle{ACM-Reference-Format}
\bibliography{reference}

\end{document}